\newtheorem{theorem}{Theorem}[section]
\newtheorem{remark}{Remark}[section]
\begin{document}\captionsetup[figure]{labelfont={bf},name={Fig.},labelsep=period}

\title{Feature-Prescribed Iterative Learning Control of Waggle Dance Movement for Social Motor Coordination in Joint Actions}
\author{Bowen Guo, Chao Zhai~\thanks{
Bowen Guo and Chao Zhai are with School of Automation, China University of Geosciences, Wuhan 430074 China,
and with Hubei Key Laboratory of Advanced Control and Intelligent Automation for Complex Systems and Engineering Research Center of Intelligent Technology for Geo-exploration, Ministry of Education, Wuhan 430074 China. [Correspondence: zhaichao@amss.ac.cn] }
}

\maketitle

\begin{abstract}
Extensive experiments suggest that motor coordination among human participants may contribute to social affinity and emotional attachment, which has great potential in the clinical treatment of social disorders or schizophrenia. Mirror game provides an effective experimental paradigm for studying social motor coordination. Nevertheless, the lack of movement richness prevents the emergence of high-level coordination in the existing one-dimensional experiments. To tackle this problem, this work develops a two-dimensional experimental paradigm of mirror game by playing waggle dance between two participants. In particular, an online control architecture of customized virtual player is created to coordinate with human player. Therein, an iterative learning control algorithm is proposed by integrating position tracking and behavior imitation with prescribed kinematic feature.
Moreover, convergence analysis of control algorithm is conducted to guarantee the online performance of virtual player. Finally, the proposed control strategy is validated by matching experimental data and compared with other control methods using a set of performance indexes.
\end{abstract}

\section{Introduction}

Patients with social disorders or schizophrenia inevitably suffer from a lot of pains in daily life, which has brought great misfortune to themselves and their relatives~\cite{Bor07,Cou06}. Existing treatments for social motor disorders mainly have recourse to psychotherapy (i.e., talk therapy) and medication (e.g., drugs or surgery), which are subject to visible downsides such as relatively high treatment cost, side effects of some medication treatments, lack of flexibility and personalization, and inaccessibility of psychotherapy to patients in the remote area.

Similarity theory in social psychology suggests that people prefer to interact with others with similar morphological and behavioral characteristics, and they tend to coordinate their actions unconsciously~\cite{Fol82,Sch14,Pax17}. Recent studies have demonstrated that interpersonal coordination processes are closely related to psychological bonding, and that motor synergy among individuals is expected to
promote social affinity and emotional attachment. As a result, it is promising to treat patients with social disorders by promoting social interaction and enhancing their coordination level through motor synergy~\cite{Wal15,Fen16,Raf15}, which provides a completely different type of noninvasive therapy. To this end, a virtual player (VP) can be designed to interact and coordinate with the patient in joint actions. By prescribing and adjusting the kinematic feature of VP to maintain a certain degree of coordination, the patient is expected to gradually accomplish the transition from unhealthy state to normal state. This helps to achieve the customized therapy and reduce the treatment cost. For instance, the EU Projects ``Alterego"~\cite{alte} and ``Sharespace"~\cite{shar} have been launched to study social motor coordination between the VP and human beings (especially the patients) in joint actions, which allows human participant in remote locations to interact with the avatar in a virtual sensorimotor space.

In practice, the emergence of high-level motor coordination between the VP and HP relies on the effective control of end effector of VP in joint actions. So far, several control strategies have been developed to foster social motor coordination among participants.
For instance, \cite{Zhai18b}~proposes a control architecture for VP with the aid of PD control to generate the personalized movements.
It is observed that the level of motor synergy between the HP and VP is elevated if the kinematics of VP are similar to those of its human partner. \cite{Zhai21} considers time delays of human motor system in transmitting, processing and analyzing visual signals and thus designs a delay controller. Experimental results demonstrates that the appropriate delays may reduce the tracking error and contribute to motor learning of different kinematic feature. In addition to PD control, adaptive control and optimal control approaches are employed to drive a VP for motor coordination with the HP~\cite{Zhai18a,plos16}. Adaptive control is able to achieve the bounded tracking of HP position, but it fails to allow for the desired kinematic feature. To address this issue, optimal control approach is proposed to integrate temporal correspondence with individual motor signature. Besides, reinforcement learning and supervised learning algorithms are also developed to train artificial agents in order to interact and coordinate their movements with HPs~\cite{Aul22,Lom18}. Although the above learning-based approaches are able to accomplish the mission of motor coordination without relying on the dynamics of VP, it takes much training time and requires extensive training data,
which renders it difficult to converge in a short time. Moreover, such learning-based approaches fail to account for the underlying motion neutral process and biological mechanism of social motor coordination in joint actions (e.g., the integration of movement correction into motor memory~\cite{ste05,bra96}).

%the integration of reinforcement learning and external control law allows to speed up the learning rate~\cite{Lel22},

As a dynamic process of mutual learning and adaptation, a high level of social motor coordination is closely related to the richness and variety of joint actions. Due to spatial constraints, one-dimensional joint actions largely restrict the emergence of complex and ingenious motor coordination among participants, which may weaken motor rehabilitation and emotional resonance of patients suffering from social deficiencies. To overcome
the barrier of spatial constraints, it would be desirable to develop a two-dimensional experimental paradigm of mirror game. In particular,
experimental results demonstrate that human participants prefer the waggle dance movements (i.e., the motion path of ``$\infty$")
to other motion paths in two-dimensional mirror game~\cite{Kas18}. In terms of control and learning strategies, existing approaches either ignore the training process to learn motor program in joint actions or fail to refine control effect of repetitive motion paths for priming collaborative movements by incorporating its repetitive nature. Motivated by the above concerns, this work aims to develop an interactive control architecture for motor therapy of social deficiencies in two-dimensional space. In brief, key contributions of this work are listed as follows.
\begin{enumerate}
  \item Propose a two-dimensional experimental paradigm of mirror game and the corresponding coordination model, which allows to produce more complex and rich collaborative movements, thereby enhancing social attachment between participants and contributing to the motor recovery in patients with social disorders.
  \item Design a customized iterative learning control algorithm for VP to learn and adapt to the motion of human partner in joint actions while possessing the desired kinematic feature and guaranteeing the convergence of state error in a few iterations, which helps to foster motor coordination by incorporating error information into the control for subsequent iterations.
  \item Create a comprehensive set of performance indexes that enables quantifying both temporal correspondence and coordination level, together with the quality of motion priming, and thus demonstrate the superiority of iterative learning control as compared with existing control strategies through experimental validations.
\end{enumerate}

The outline of this paper is given as follows. Section \ref{sec:pre} introduces the preliminaries, which include two-dimensional (2D) mirror game, the coordination model and controller design. Section~\ref{sec:cont} presents the control architecture of VP in joint actions. Section~\ref{sec:theory} focuses on the stability of iterative learning control (ILC) algorithm. Then, experimental validations are conducted using the proposed metrics in Section~\ref{sec:result}. Finally, Section~\ref{sec:con} makes a conclusion and discusses future directions.

\section{Preliminaries}\label{sec:pre}

This section initially introduces the experimental paradigm of 2D mirror game, followed by the construction of the coordination model of VP in joint actions, and finally develops an ILC law with the prescribed kinematic feature.

\subsection{Two-dimensional mirror game}\label{2.1}

\begin{figure}[t!]
\centering
\begin{minipage}[t]{0.48\textwidth}
\centering
\includegraphics[width=7cm,height=4.8cm]{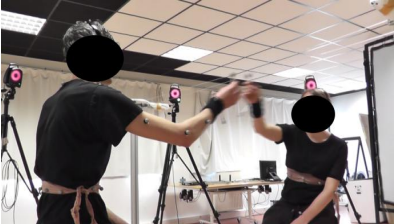}
\end{minipage}
\begin{minipage}[t]{0.48\textwidth}
\centering
\includegraphics[width=7cm,height=4.8cm]{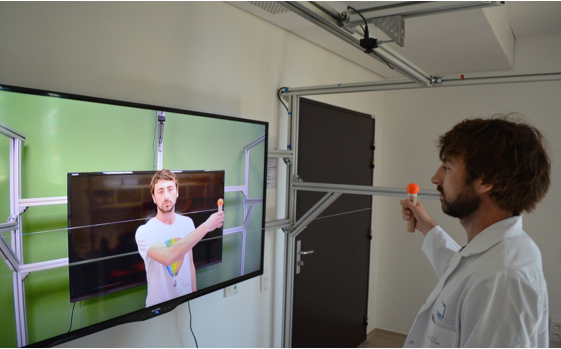}
\end{minipage}
\caption{HP-HP and HP-VP interactions in the mirror game~\cite{Zhai18a}. The left subfigure shows mirror game between two HPs, while the right one displays mirror game between HP and VP.}
\label{fig1}
\end{figure}

Mirror game provides a simple and effective experimental paradigm in the study of social motor coordination between two participants~\cite{Noy11} (see Fig.~\ref{fig1}). In the mirror game, each participant holds the handle of the ball to move the ball back and forth along the string, and the goal is to synchronize their movement and create interesting motions to enjoy the game. Different experimental conditions are designed to investigate social motor coordination in one-dimensional (1D) mirror game. This work extends the original 1D mirror game to its 2D version in order to enrich collaborative movements and foster emotional resonance (see Fig.~\ref{fig2}). Instead of moving along the straight line, participants are instructed to generate the  waggle dance movement in the 2D space. Inspired by 1D mirror game, two experimental conditions for 2D mirror game are described as follows.
\begin{enumerate}
\item  {\it Solo Condition}: This is an individual round. Each participant tries to draw the motion path $"\infty"$ (i.e., waggle dance movement) around a center on his/her own and keeps the center stable while maintaining a certain degree of motion smoothness.
\item  {\it Leader-Follower Condition}: This is a collaborative round. One participant draws the motion path $"\infty"$ around a center and leads the game while maintaining a certain degree of motion smoothness, while the other tries to follow the leader's movement and create synchronized motions and keep the center stable while maintaining a certain degree of motion smoothness.
\end{enumerate}

\begin{figure}[t!]
\scalebox{0.25}[0.25]{\includegraphics{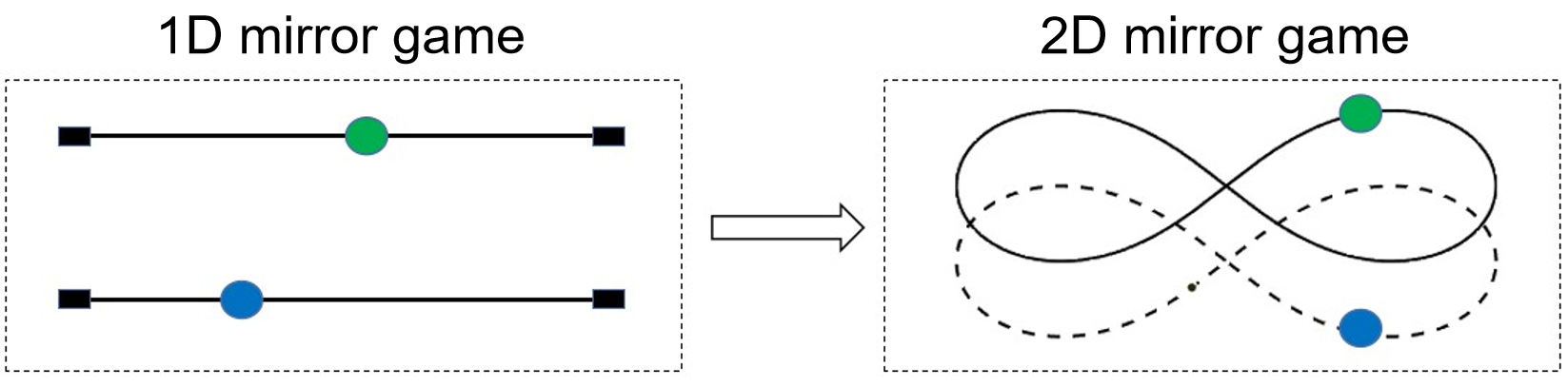}}\centering
\caption{\label{fig2} Schematic diagram on the transition from the 1D mirror game to its 2D version. The blue and green balls are moved by two participants in the mirror game, respectively.}
\end{figure}

\subsection{Coordination model for the 2D joint actions}\label{hkb}

As known, the Haken-Kelso-Bunz (HKB) model is able to capture the phase transition in motor coordination of human end effectors~\cite{Hak04}. Consequently, the coupled HKB oscillators are adopted to describe the 2D motion of end effector in this work.
Then, the 2D coordination model is designed as follows.
\begin{equation}\label{hkb-model}
\left\{
\begin{aligned}
\ddot{z}_1+(\alpha \dot{z}_1^2+\beta z_1^2-\gamma)\dot{z}_1+\omega^2z_1=u_1 \\
\ddot{z}_2+(\alpha \dot{z}_2^2+\beta z_2^2-\gamma)\dot{z}_2+\omega^2z_2=u_2 \\
\end{aligned}
\right.
\end{equation}
where $z_1$, $\dot{z}_1$ and $\ddot{z}_1$ represent the position, velocity and acceleration of VP on the x-axis, respectively.
Likewise, $z_2$, $\dot{z}_2$ and $\ddot{z}_2$ refer to the position, velocity and acceleration of VP on the y-axis, respectively.
Moreover, the parameters $\alpha$, $\beta$, $\gamma$ and $\omega$ characterize the response of uncoupled end effectors when subjected to some reference signals. In particular, $u_1$ and $u_2$ refer to control inputs of the coordination model. Essentially, the objective of this work is to coordinate the motion between HP and VP with the prescribed kinematic characteristic by designing feedback controllers for the VP~\cite{Slo16}.
By introducing new state variables $x_1=z_1$, $x_2=\dot z_1$, $x_3=z_2$, $x_4=\dot z_2$, the control system (\ref{hkb-model}) can be converted
to a state space model as follows
\begin{equation*}\label{state-space}
\left\{
\begin{aligned}
\dot{x}_1 &= x_2 \\
\dot{x}_2 &=-(\alpha x_2^2+\beta x_1^2-\gamma)x_2-\omega^2 x_1+u_1 \\
\dot{x}_3 &= x_4 \\
\dot{x}_4 &=-(\alpha x_4^2+\beta x_3^2-\gamma)x_4-\omega^2 x_3+u_2,\\
\end{aligned}
\right.
\end{equation*}
and it can be described in compact form below
\begin{equation}\label{hkb}
\left\{
\begin{aligned}
&\dot{\bf{x}}=\bf{H}(\bf{x},\bf{\xi})+\bf{B}\bf{u} \\
&\bf{y}=\bf{C}\bf{x},
\end{aligned}
\right.
\end{equation}
with $\mathbf{x}=(x_1,x_2,x_3,x_4)^T$, control input $\mathbf{u}=(u_1,u_2)^T$ and
$$
\mathbf{H}(\mathbf{x},\bf{\xi})=\begin{pmatrix}
    x_2 \\
   -(\alpha x_2^2+\beta x_1^2-\gamma)x_2-\omega^2 x_1 \\
   x_4  \\
   -(\alpha x_4^2+\beta x_3^2-\gamma)x_4-\omega^2 x_3 \\
  \end{pmatrix}, \quad B=\begin{pmatrix}
   0 & 0 \\
   1 & 0 \\
   0 & 0 \\
   0 & 1 \\
 \end{pmatrix}, \quad C=\begin{pmatrix}
   1 & 0 & 0 & 0\\
   0 & 0 & 1 & 0\\
 \end{pmatrix}
$$
where $\bf{\xi}=(\alpha,\beta,\omega,\gamma)^T$ denotes a parameter vector.

\subsection{Iterative learning control with prescribed kinematic feature}\label{ilc-law}
Given the iterative learning process of joint actions in the 2D mirror game, this work adopts the ILC strategy to regulate the movement
of VP~\cite{bri06}. The previous studies show that movement observation may lead to the formation of a lasting specific memory trace in
movement representations, a kind of motor memory~\cite{ste05}, which provides the physiological account of leveraging ILC for motor coordination.
The basic idea is to use the tracking error to correct the control input, which allows to produce a new control input for the next iteration. With the increase of iterations, the ILC strategy enables to generate a recursive sequence of control inputs, which drive
the VP to interact with the HP in the collaborative condition. By taking into account individual kinematic feature and mutual learning and adaptation,
the ILC law is constructed as follows
%\begin{equation}\label{ilc}
%u_{k}=u_{k-1}+\kappa_p e_{k-1}+\kappa_v \dot e_{k-1}+\kappa_s s_{k-1},
%\end{equation}
\begin{equation}\label{ilc}
\mathbf{u}_{k}=\mathbf{u}_{k-1}+\kappa_p \mathbf{e}_{k-1}+\kappa_v \mathbf{\dot{e}}_{k-1}+\kappa_s\mathbf{s}_{k-1},
\end{equation}
where the subscript $k$ denotes the number of iterations, $\kappa_p,\kappa_v,\kappa_s$ are the control gains. In addition, $\mathbf{e}_{k-1}=\mathbf{y}_h-\mathbf{y}_{k-1}$ refers to the tracking error vector between HP and VP, and $\mathbf{y}_h$ refers to the 2D position vector of HP. In particular, $\mathbf{s}_{k-1}=\mathbf{v}-\mathbf{y}_{k-1}$ characterizes the mismatch of kinematic characteristic between the VP and the prescribed feature, and $\mathbf{v}$ represents the pre-recorded velocity trajectory of HP in the solo condition. Based on the resetability of initial state, the state error between VP and HP is bounded under the effect of ILC. By replacing $\mathbf{u}$ in (\ref{hkb}) with $\mathbf{u}_k$ in (\ref{ilc}), the system dynamics becomes
\begin{equation}\label{ilc-sys}
\left\{
\begin{aligned}
&\bf{\dot{x}}_k=\bf{H}(\mathbf{x}_k,\bf{\xi})+\bf{B}\bf{u}_k \\
&\bf{y}_k=\bf{C}\bf{x}_k,\\
\end{aligned}
\right.
\end{equation}
where $\mathbf{x}_k$, $\mathbf{u}_k$ and $\mathbf{y}_k$ are the corresponding state variables, control inputs and control outputs of the system in the $k$-th iteration.

\section{Control Architecture}\label{sec:cont}
This section presents the control architecture of VP, which includes information collection of HP states, ILC with prescribed kinematic feature
and movement generation of VP, as shown in Fig.~\ref{fig3}.
\begin{enumerate}
  \item \textbf{Information collection}: The camera or position sensor is used to collect and record position-time series of HP. After filtering and velocity estimation, it is used as the state information of HP.
  \item \textbf{ILC with prescribed feature}: The position and velocity information of HP is sent to VP, and then the state error is obtained. The iterative learning controller enables to correct the output of VP using the tracking error. In addition, the individual kinematic features is incorporated into the ILC law to generate the personalized movements.
  \item \textbf{Movement generation}: The ILC input is fed into the coordination model of VP, which enables to produce the collaborative movements in order to facilitate social motor coordination between the VP and the HP.
\end{enumerate}

\begin{figure}[t!]
\scalebox{0.075}[0.075]{\includegraphics{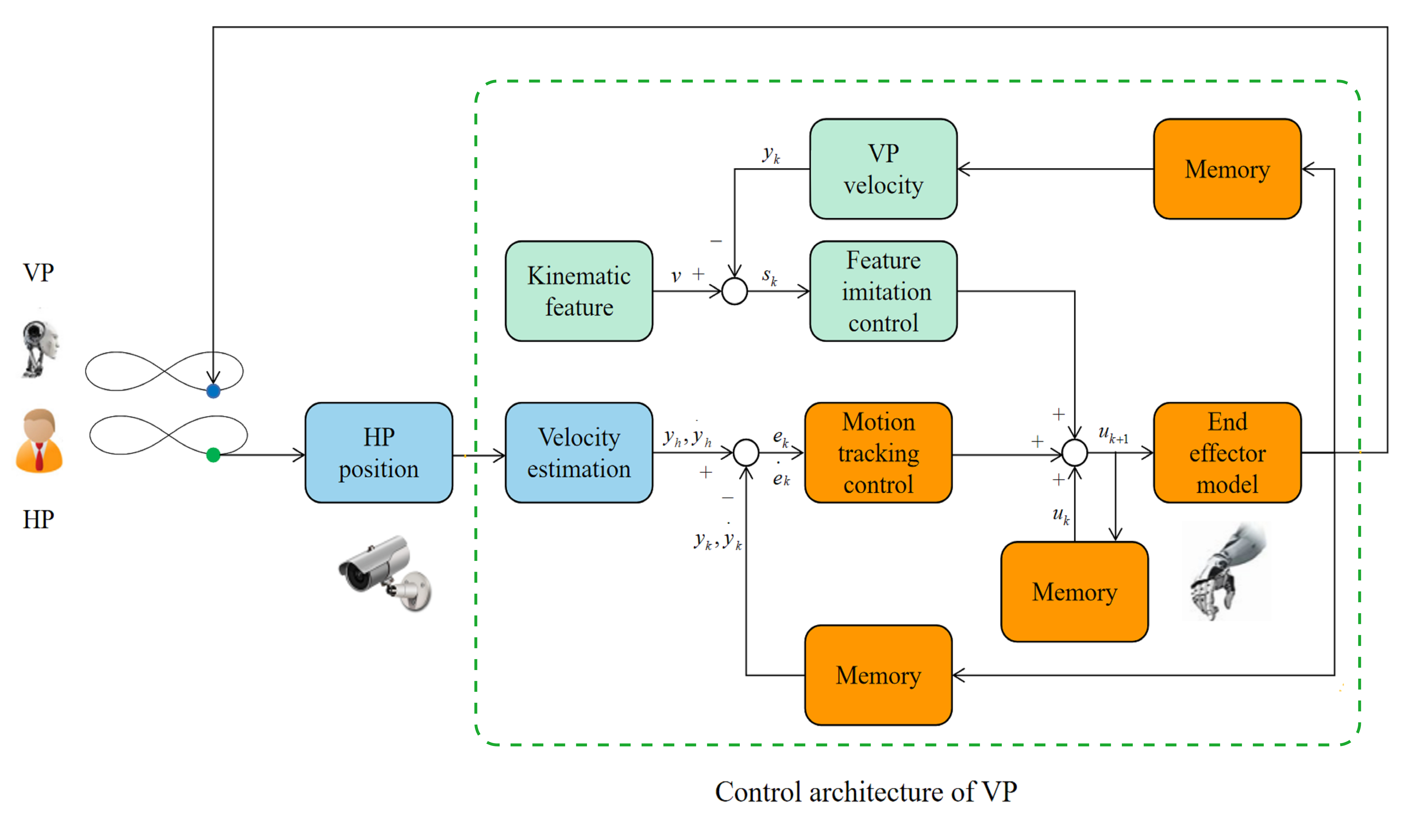}}\centering
\caption{Control Architecture of VP in the framework of ILC}\label{fig3}
\end{figure}

In the control architecture of VP, the vision sensor ensures the information collection of HP, and the ILC law enables VP to
coordinate with the HP in a personalized manner. Algorithm~\ref{tab:oma} elaborates the operation process of ILC algorithm.
Define VP position and HP position as $\mathbf{p}_k(t)=(x_{1,k},x_{3,k})^T$ and $\mathbf{p}_h(t)=(x_{1,h},x_{2,h})^T$.
First of all, the initial states of VP, the threshold of error rate $\epsilon_{th}$ and running time $T$ are specified.
Then,  the rate of position error $\epsilon(t)$ between VP and HP is computed to determine whether it exceeds the specified threshold.
If the condition is satisfied, the control input is updated according to (\ref{ilc}), and it drives the coordination model to generate trajectory of VP. Next, the above iteration process does not stop until the running time is up.
The iteration process allows VP to continuously learn and adapt to HP motion, while maintaining individual kinematic feature.
As the iteration progresses, the VP gradually adapts to the social interaction with the HP, and the state error gets small
as well.
\begin{algorithm}[t]
\caption{\label{tab:oma} Iterative Learning Control Algorithm}
\hspace*{0.02in} Initialize $z_k(0)$, $\dot{z}_k(0)$, $\epsilon_{th}$, $T$, $k=0$
\begin{algorithmic}[1]
\While{$t<T$}
   \State Compute $\varepsilon(t)=\|\mathbf{p}_h(t)-\mathbf{p}_k(t)\|/\|\mathbf{p}_h(t)\|$
   \If {$\varepsilon(t)>\epsilon_{th}$}
       \State Compute tracking error $\mathbf{e}_k(t)$
       \State Compute control input $\mathbf{u}_k(t)$ with (\ref{ilc})
       \State Update k=k+1
    \Else
       \State Obtain control input $\mathbf{u}(t)=\mathbf{u}_k(t)$
       \State Generate the trajectory of VP with (\ref{hkb-model})
       \State Reset $k=0$
    \EndIf \State {\bf end~if}
\EndWhile \State {\bf end~while}
\end{algorithmic}
\end{algorithm}

\section{Theoretical analysis} \label{sec:theory}
This section provides theoretical analysis on the convergence of ILC algorithm and the stability of closed-loop system. Firstly, inequality (\ref{lipschitz}) and inequality (\ref{gronwallinequality}) are presented based on Lipschitz condition and Gronwall lemma, respectively. And then the convergence of state error is obtained. Finally, the above conclusion is extended to the linear case.

\subsection{Convergence analysis of ILC algorithm}\label{sec4.1}

The HKB oscillator is often used to describe interpersonal motor coordination of multiple people, and it can also generate uncoupled  individual motion. Note that human motion is limited, his/her position-time series is bounded. In addition, the moving average filter is proposed to  increase smoothness after collecting HP position trajectory. Meanwhile, HP velocity trajectory is estimated from position trajectory according to the difference method. It is easy to get that HP velocity-time series is bounded, too. Although it is nonlinear, HKB oscillator can  stably generate continuous state trajectory, and it has been proved that HKB oscillator has a limit cycle according to \cite{Zhai18a}, so the output of HKB oscillator is a bounded trajectory. By introducing the appropriate correction value $u_h$, HP motion trajectory can be described by HKB system (\ref{hkb}). In other words, there exists the only expected control input $u_h$ so that VP state matches HP state, that is
\begin{equation*}
\left\{
\begin{aligned}
&\mathbf{\dot{x}}_h=\mathbf{H}(\mathbf{x}_h,\bf{\xi})+\bf{B} \mathbf{u}_h \\
&\mathbf{y}_h=\mathbf{C}\mathbf{x}_h,\\
\end{aligned}
\right.
\end{equation*}
where $\mathbf{x}_h=(x_{1,h},\dot x_{1,h},x_{2,h},\dot x_{2,h})$ and $x_k(0)=x_h(0)=0$. For convenience, the following definitions are given as
\begin{equation} \label{error-sys}
\left\{
\begin{aligned}
&\delta \mathbf{u}_k=\mathbf{u}_h-\mathbf{u}_k  \\
&\delta \mathbf{x}_k=\mathbf{x}_h-\mathbf{x}_k  \\
&\delta \mathbf{H}(\mathbf{x_k} ,\bf{\xi})=H(\mathbf{x_h} ,\bf{\xi})-H(\mathbf{x_k} ,\bf{\xi}).\\
\end{aligned}
\right.
\end{equation}
%{\color{blue}
%\begin{lemma}
%$\mathbf{H}(\mathbf{x_k},\bf{\xi})$ is Lipschitz continuous
%\end{lemma}
%}
Note that function $\mathbf{H}(\mathbf{x_k} ,\bf{\xi})$ is Lipschitz continuous, hence one has
\begin{equation*}
\|\mathbf{H}(\mathbf{x}_h,\mathbf{\xi})-\mathbf{H}(\mathbf{x_k},\bf{\xi})\|\leq C_H\|x_h-x_k\|,
\end{equation*}
where
$$
\frac{\partial H}{\partial x}=\begin{pmatrix}
   0 & 1 & 0 & 0\\
   -2\beta x_1x_2-\omega^2 & -3\alpha x_2^2-\beta x_1^2-\gamma &  0 & 0\\
   0 & 0 & 0 & 1\\
   0 & 0 &-2\beta x_3x_4-\omega^2 & -3\alpha x_4^2-\beta x_3^2-\gamma\\
 \end{pmatrix}
$$
and $C_H = \max\limits_{t\in[0,T]}\|\frac{\partial H}{\partial x}\|$. Note that there exists a limit cycle in the HKB oscillator according to \cite{Zhai18a}. As the system runs, the system state tends to the limit cycle, so the system state is bounded. It can be concluded that $\|{\partial H}/{\partial x}\|$ is bounded. Considering that $H(\mathbf{x_k},\bf{\xi})$ is Lipschitz continuous, one obtains
\begin{equation}\label{lipschitz}
\begin{aligned}
\langle H(\mathbf{x_h},\bf{\xi})-H(\bf{x}_k,\bf{\xi}),x_h -x_k\rangle&=[H(\bf{x}_h,\bf{\xi})-H(\bf{x}_k,\bf{\xi})]^T[x_h -x_k] \\
&\leq \|\delta \bf{H}(\bf{x}_k,\bf{\xi})\|\cdot\| \mathbf{x}_h- \mathbf{x}_k \| \\
&\leq C_H\|\mathbf{x}_h-\mathbf{x}_k\|\cdot\|\mathbf{x}_h-\mathbf{x}_k\|=C_H\|\mathbf{x}_h-\mathbf{x}_k\|^2.
\end{aligned}
\end{equation}

\begin{theorem}\label{theo4.1}
For the control system (\ref{ilc-sys}) with control input (\ref{ilc}), the state error between two players is bounded.
\end{theorem}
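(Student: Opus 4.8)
The plan is to work with the error variables $\delta\mathbf{x}_k=\mathbf{x}_h-\mathbf{x}_k$ and $\delta\mathbf{u}_k=\mathbf{u}_h-\mathbf{u}_k$ from (\ref{error-sys}) and to bound $\sup_{t\in[0,T]}\|\delta\mathbf{x}_k(t)\|$. First I would subtract (\ref{ilc-sys}) from the reference model of the HP to obtain the error dynamics $\delta\dot{\mathbf{x}}_k=\delta\mathbf{H}(\mathbf{x}_k,\bm{\xi})+\mathbf{B}\,\delta\mathbf{u}_k$, with $\delta\mathbf{x}_k(0)=0$ by resetability of the initial state. Integrating, taking norms, and inserting the Lipschitz bound $\|\delta\mathbf{H}(\mathbf{x}_k,\bm{\xi})\|\le C_H\|\delta\mathbf{x}_k\|$ underlying (\ref{lipschitz}) gives $\|\delta\mathbf{x}_k(t)\|\le C_H\int_0^t\|\delta\mathbf{x}_k(s)\|\,ds+\|\mathbf{B}\|\int_0^t\|\delta\mathbf{u}_k(s)\|\,ds$; the Gronwall lemma, i.e. inequality (\ref{gronwallinequality}), then yields $\|\delta\mathbf{x}_k(t)\|\le L\int_0^t\|\delta\mathbf{u}_k(s)\|\,ds$ on $[0,T]$ with $L:=\|\mathbf{B}\|e^{C_HT}$, so the state error will be controlled once $\delta\mathbf{u}_k$ is.

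Next I would unwind the control update. Using $\mathbf{e}_{k-1}=\mathbf{C}\,\delta\mathbf{x}_{k-1}$, $\dot{\mathbf{e}}_{k-1}=\mathbf{C}\,\delta\mathbf{H}(\mathbf{x}_{k-1},\bm{\xi})$ (the $\mathbf{B}\,\delta\mathbf{u}_{k-1}$ part drops out since $\mathbf{C}\mathbf{B}=0$ for the position-output/acceleration-input structure), and $\mathbf{s}_{k-1}=(\mathbf{v}-\mathbf{y}_h)+\mathbf{C}\,\delta\mathbf{x}_{k-1}=:\mathbf{d}+\mathbf{C}\,\delta\mathbf{x}_{k-1}$ with $\mathbf{d}$ bounded because $\mathbf{v}$ and $\mathbf{y}_h$ are both bounded (as argued before the theorem), subtracting (\ref{ilc}) from $\mathbf{u}_h$ gives $\delta\mathbf{u}_k=\delta\mathbf{u}_{k-1}-(\kappa_p+\kappa_s)\mathbf{C}\,\delta\mathbf{x}_{k-1}-\kappa_v\mathbf{C}\,\delta\mathbf{H}(\mathbf{x}_{k-1},\bm{\xi})-\kappa_s\mathbf{d}$. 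Taking norms and feeding in the Step-one estimate for iteration $k-1$ turns this into a Bellman--Gronwall recursion in the iteration index, $\|\delta\mathbf{u}_k(t)\|\le\|\delta\mathbf{u}_{k-1}(t)\|+ML\int_0^t\|\delta\mathbf{u}_{k-1}(s)\|\,ds+|\kappa_s|\,D$, where $M:=(|\kappa_p+\kappa_s|+|\kappa_v|C_H)\|\mathbf{C}\|$ and $D:=\sup_{t\le T}\|\mathbf{d}(t)\|$.

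Then I would close the argument by induction on $k$: the initial error $\delta\mathbf{u}_0=\mathbf{u}_h-\mathbf{u}_0$ is bounded on $[0,T]$, and the recursion propagates boundedness from $\delta\mathbf{u}_{k-1}$ to $\delta\mathbf{u}_k$, hence through Step one to $\delta\mathbf{x}_k$, which is the assertion. If a bound uniform over all iterations is wanted I would instead carry the recursion in the weighted norm $\|f\|_\lambda=\sup_{t\le T}e^{-\lambda t}\|f(t)\|$, in which the integral operator contributes a gain $\le ML/\lambda$; taking $\lambda$ large keeps the cumulative effect of the forcing $|\kappa_s|D$ finite (alternatively, the reset rule in Algorithm~\ref{tab:oma} keeps $k$ from growing without bound). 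Along the way I would also confirm that every iterate $\mathbf{x}_k$ stays in a compact set so that $C_H=\max_{t\in[0,T]}\|\partial\mathbf{H}/\partial\mathbf{x}\|$ evaluated along the trajectory is genuinely finite; this follows inductively from boundedness of $\mathbf{x}_h$ together with the boundedness of $\delta\mathbf{x}_{k-1}$ just obtained.

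The hard part will be the term $\kappa_s\mathbf{d}$: the prescribed-feature mismatch $\mathbf{v}-\mathbf{y}_h$ is a persistent exogenous input re-injected into the control error at every iteration, so the iteration map is not a contraction and the accumulation of this forcing must be kept under control — which is precisely what the weighted-norm estimate (or the reset mechanism) handles, and what would also have to be sharpened if one wanted to upgrade ``bounded'' to convergence. The remaining steps are essentially bookkeeping once $C_H$ is fixed on the relevant compact set.
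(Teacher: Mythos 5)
Your Step 1 (Lipschitz plus Gronwall to bound the state error by the input error) and your input-error identity in Step 2 coincide with the paper's Steps 1 and 2; your observation that $\mathbf{C}\mathbf{B}=0$ simply evaluates the paper's factor $I-\kappa_v \mathbf{C}\mathbf{B}$ explicitly. The genuine gap is in how you close the recursion over the iteration index. The paper's Step 3 turns $\|\delta\mathbf{u}_{k+1}\|^2_\lambda\le\sigma\|\delta\mathbf{u}_k\|^2_\lambda+\eta$ into a convergent geometric series by \emph{assuming} the gains are tuned so that $\sigma<1$ (with $\sigma_1=4\|I-\kappa_v\mathbf{C}\mathbf{B}\|^2$ its dominant term), and then concludes $\lim_{n\to\infty}\|\delta\mathbf{u}_{k+n}\|^2_\lambda\le\eta/(1-\sigma)$, i.e.\ a bound that is uniform in the iteration index. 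Your recursion in the weighted norm reads $\|\delta\mathbf{u}_k\|_\lambda\le(1+ML/\lambda)\|\delta\mathbf{u}_{k-1}\|_\lambda+|\kappa_s|D$, whose leading coefficient $1+ML/\lambda$ is $\ge 1$ for every $\lambda$; no choice of $\lambda$ makes the iteration map a contraction, so the forcing $|\kappa_s|D$ accumulates without bound as $k\to\infty$ and the claimed uniform bound does not follow. Enlarging $\lambda$ only suppresses the integral term, not the unit coefficient multiplying $\|\delta\mathbf{u}_{k-1}\|_\lambda$ --- which, by your own computation $\mathbf{C}\mathbf{B}=0$, cannot be pushed below $1$ by any choice of $\kappa_v$.

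What survives of your argument is the plain induction on $k$, which shows only that each iterate is bounded on $[0,T]$ with a bound that may grow with $k$; that is strictly weaker than the iteration-independent asymptotic bound the paper establishes, and the fallback appeal to the reset rule of Algorithm~\ref{tab:oma} is a statement about the implementation rather than a proof of the theorem about system (\ref{ilc-sys}) with law (\ref{ilc}). To reach the paper's conclusion you need the contraction hypothesis $\sigma<1$ (or an explicit substitute for it) stated as a condition on $\kappa_p,\kappa_v,\kappa_s$ and $\lambda$, and the bounded forcing $\eta=8\kappa_s^2\|\mathbf{v}-\mathbf{y}_h\|^2_\lambda$ then yields the limit $\eta/(1-\sigma)$; your correct identification of the persistent term $\mathbf{v}-\mathbf{y}_h$ as the obstruction is exactly why that hypothesis cannot be dispensed with.
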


\begin{proof}
See Appendix~A.
\end{proof}

\begin{remark}
The convergence of state error ensures that VP can follow HP motion. However, the target of VP is not to completely track HP motion. Therefore, solo motion is injected into the control law to ensure that VP exhibit the personal kinematic feature during the interaction. By adjusting $\kappa_s$, VP can achieve the best performance between tracking HP motion and following desired kinematic feature to better match the HP-HP pair. Regardless of iterations, the ILC law degenerates PD control.

% It should be noted that the purpose of VP is not to completely track HP motion. In order to create diverse and novel forms of motion, it is necessary to increase the
% effect of feature imitation control so that VP behaves more like its human avatar. The expected result is that HP behaves the same in HP-VP interaction as he/she does in HP-HP interaction.
\end{remark}

%\begin{remark}
%When $\alpha=\beta=0$ and $\gamma$ is negative, the HKB model degenerates into the linear DHO model, which can be expressed as follows
%\begin{equation*}
%\left\{
%\begin{aligned}
%\ddot{z}_1 +\big|\gamma\big|\dot{z}_1 +\omega^2z_1 =u_1  \\
%\ddot{z}_2 +\big|\gamma\big|\dot{z}_2 +\omega^2z_2 =u_2 .\\
%\end{aligned}
%\right.
%\end{equation*}
%Apparently, the above convergence result still holds in the linear case.
%\end{remark}

\section{Experimental Validation}\label{sec:result}
In order to validate the proposed control approach, experiments are conducted with leap motion devices. The performance indexes such as root mean square error (RMSE) and circular variance (CV) are adopted to quantify the coordination performance. In addition, a new metric of spatial variation of motion (SVM) is designed to quantify the motion priming. In the HP-VP interaction, HP may unconsciously stop moving or shake suddenly, which results in the movement fluctuation of VP and ultimately affect social motor coordination.
The VP is more popular due to its smooth movements, which is more likely to make people socially attached~ \cite{Kas18}.
Thus, a moving average filter is added to the control architecture of VP to preprocess HP motion trajectory to ensure its smoothness.
In what follows, experimental setup is illustrated, followed by the introduction of performance indexes. Afterwards, statistical analysis is presented with the comparisons of other control strategies.

\subsection{Experimental setup}\label{sec:leap}
The leap motion controller is equipped with dual cameras, like the human eye and is able to coordinate positioning of space objects. Using the leap motion device for experiments can effectively eliminate friction and make the interaction process between participants and VP smoother. Figure~\ref{fig4} shows that human participant interacts with VP on the screen through the leap motion device.
The experiments are carried out in the Unity environment~\cite{gbw2}. The motion data of participants are recorded by the leap motion camera and the script attached to the virtual arm. In the experiments, 8 participants are selected and their motion trajectories are collected under the solo condition in advance to obtain their kinematic feature. Then these participants are classified into $4$ dyads. Each dyad takes for 5 trials, and each trial lasts for $30$ seconds. Each dyad plays the 2D mirror game in the leader-follower condition (LFC). Two participants in each dyad take the role of leader and follower in turn, and the motion data in the script is collected and saved for statistical analysis.

\begin{figure}[t!]
\centering
\begin{minipage}[t]{0.48\textwidth}
\centering
\includegraphics[width=6cm,height=4cm]{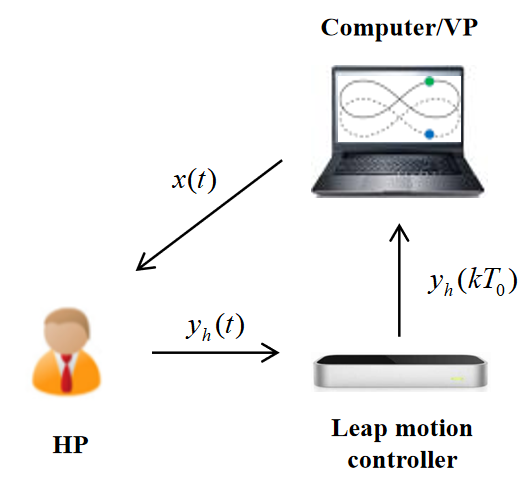}
\end{minipage}
\begin{minipage}[t]{0.48\textwidth}
\centering
\includegraphics[width=6cm,height=4cm]{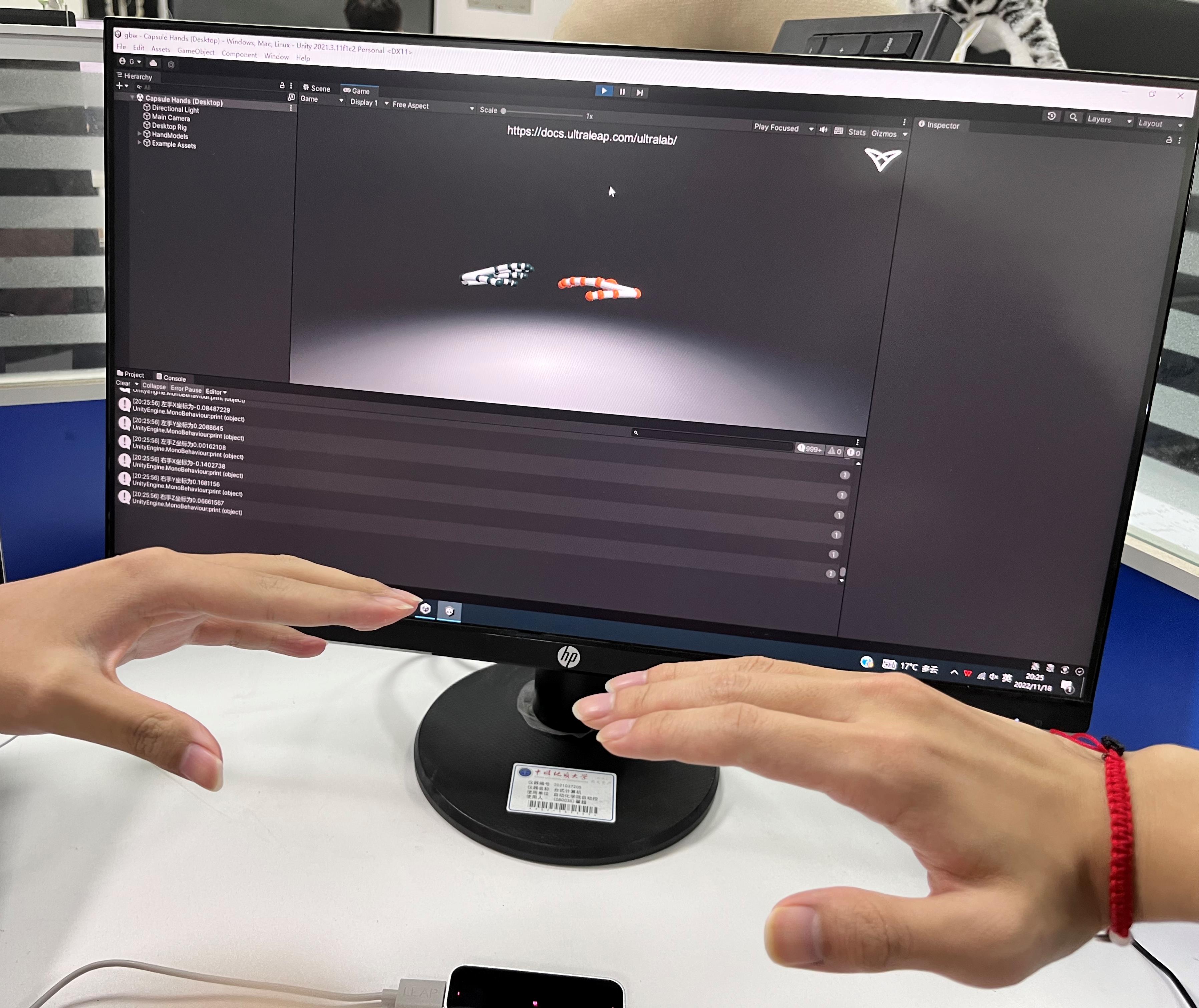}
\end{minipage}
\caption{Left: Participant uses the leap motion device to play a 2D mirror game with VP on the screen. Right: Two participants use leap motion device to play a 2D mirror game in the unity environment.}
\label{fig4}
\end{figure}

\subsection{Metrics}\label{sec:met}
A set of metrics are used to quantify the coordination performance between two players in the 2D mirror game. The RMSE of the position-time series represents the temporal correspondence between two players. The CV represents the coordination level between two players. The SVM characterizes the maximum range of movement. And their formal descriptions are presented as follows.
\begin{itemize}
\item [1)]
${\rm RMSE}=\sqrt {\frac{1}{n}\sum_{j=1}^{n}[(x_{1,j}-x_{2,j})^2+(y_{1,j}-y_{2,j})^2]}$, where $n$ is the number of sampling steps. In addition, $x_{1,j}$ and $x_{2,j}$ ($y_{1,j}$ and $y_{2,j}$) denote the positions of two players at the $j$-th step in the $x$-axis ($y$-axis) direction, respectively. A lower value of $RMSE$ indicates the better temporal correspondence~\cite{Zhai18a}.
\item [2)]
${\rm CV}=\|\frac{1}{n}\sum_{j=1}^{n}e^{i\delta\Phi_j}\|_2\in [0,1]$, where $\delta \Phi_j$ represents the sum of relative phase  between two players at the $j$-th step,  and $\|\cdot\|_2$ denotes the 2-norm. A higher value of $CV$ represents a higher coordination level between two players~\cite{Kre07}.
\item [3)]
${\rm SVM}=\|\mathbf{x}\|_{\infty}\|\mathbf{y}\|_{\infty}$, where $\mathbf{x}$ and $\mathbf{y}$ denote the position sequences in the $x$-axis and $y$-axis directions. $\|\cdot\|_{\infty}$ denotes the $\infty$-norm. A higher value of SVM indicates the better motion priming and novelty.
\end{itemize}

\subsection{Statistical analysis}\label{sec:sta}
In order to verify the effectiveness of ILC strategy, it is necessary to match the performance indexes of HP-VP pair with those of HP-HP pair (i.e., benchmark) as shown in Fig.~\ref{fig45}. Two HPs play the 2D mirror game and  their motion trajectories are recorded to compute the performance indexes as the benchmark. Then, the VP driven by the control strategy is adopted to replace HP2 and play 2D mirror game with HP1. The solo motions of HP2 are injected into VP so as to make VP act more like HP2 during interaction. Then the mean values and standard deviations of RMSE, CV and SVM are computed for each dyad. The matching results of the above control strategies are obtained with respect to benchmark. The indexes RMSE and CV are used to test the temporal correspondence and coordination level of each dyad of participants, while the metric SVM is employed to quantify the motion priming and novelty. The control parameters in the model are picked as follows $\alpha=0.01$, $\beta=0.01$, $\gamma=0.01$, $\omega=0.02$. Figure \ref{fig5} shows the 2D interactive motion of HP-VP pairs with ILC. It can be observed that VP and HP draw $"\infty"$ around a center.

\begin{figure}
\scalebox{0.3}[0.3]{\includegraphics{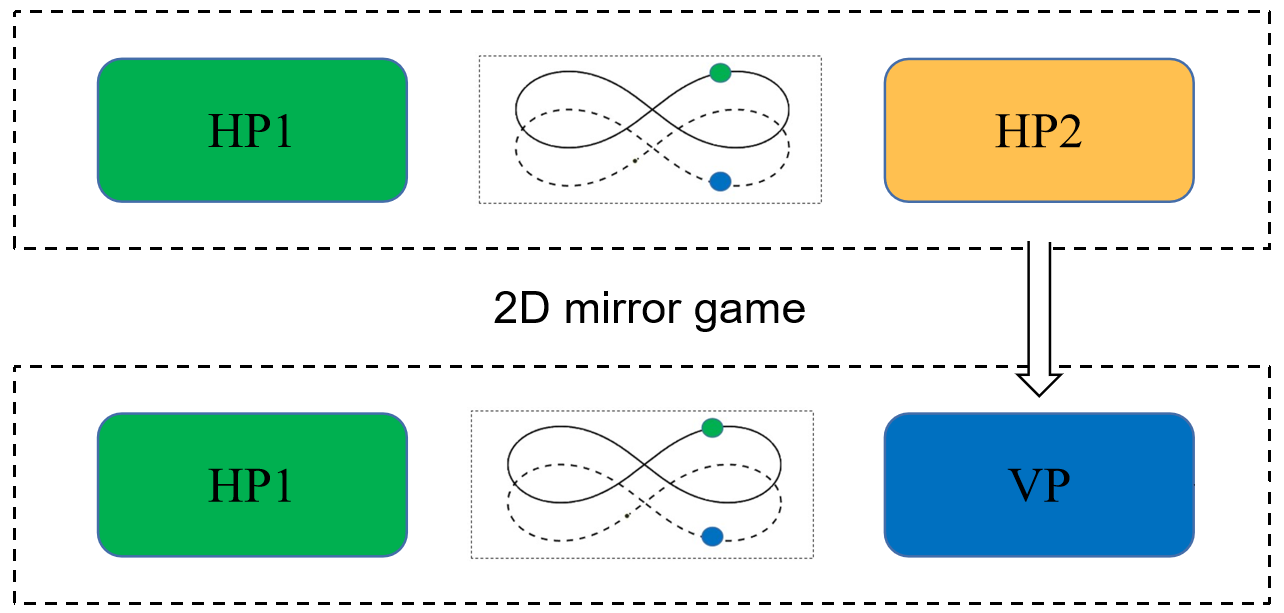}}\centering
\caption{Using HP-HP pair as benchmark to quantify the performance of VP driven by different control strategies.}\label{fig45}
\end{figure}

\begin{figure}[h]
\scalebox{0.3}[0.3]{\includegraphics{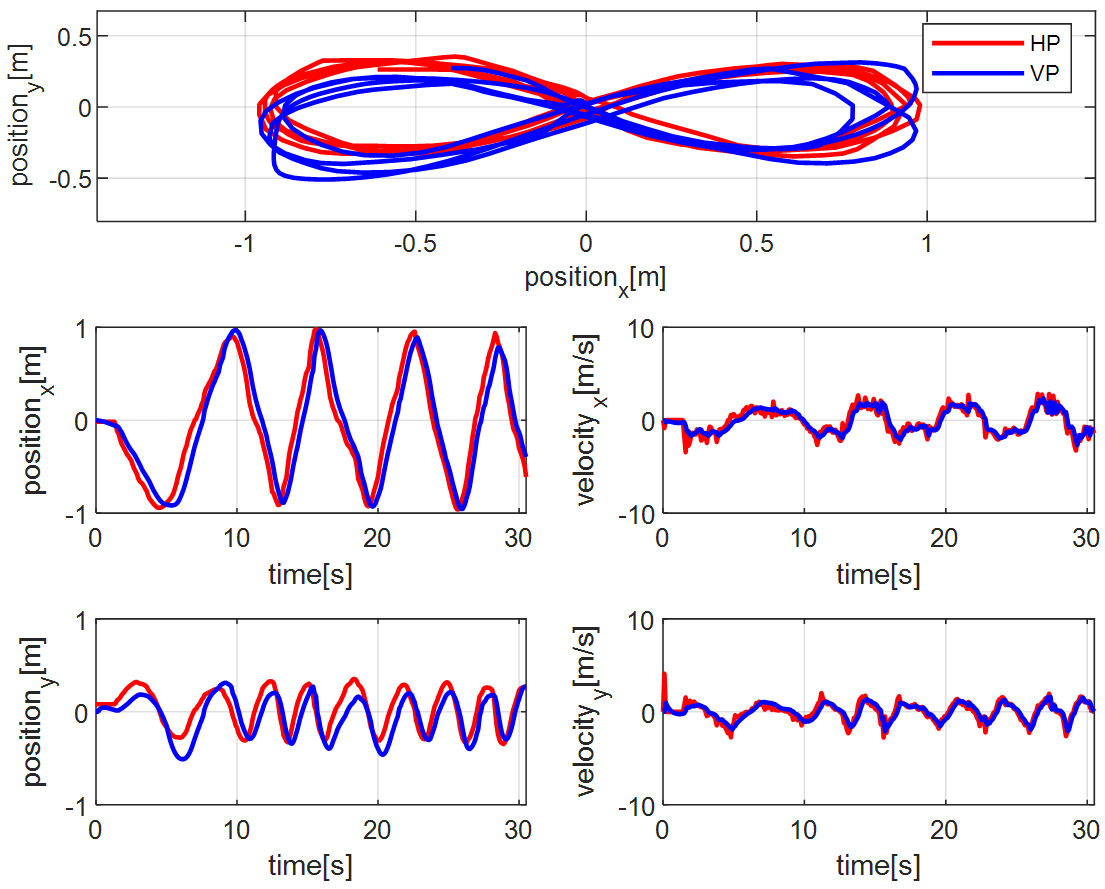}}\centering
\caption{2D interactive motion of HP-VP pairs with ILC.}\label{fig5}
\end{figure}

In the experiments, 4 Dyads are taken into account with each dyad including a HP-HP pair and a VP-HP pair.
Specifically, three trials are conducted for each HP-HP pair. Then the mean
values and standard deviations of RMSE, CV and SVM are computed for
the trials. The above values are regarded as the benchmark for the
validation of VP-HP interaction. Next, the model parameters are tuned for each VP-HP pair to match the benchmark of the HP-HP pair.  Essentially, each VP is assigned a different kinematic feature in advance to act as his/her avatar in the simulation.
Figures~\ref{fig6},~\ref{fig7} and~\ref{fig8} present matching results of VP-HP pairs with respect to their benchmarks (i.e., HP-HP pairs) in leader-follower condition. Specifically, RMSE in each dyad with ILC are $0.17\pm0.005$, $0.16\pm0.002$, $0.12\pm0.002$, $0.14\pm0.003$, respectively. It is observed that dyads 1, 2, and 4 with ILC can maintain a low level of RMSE and a minimum mismatch with the benchmark.
In terms of CV, the values with ILC are $0.43\pm0.002$, $0.15\pm0.002$, $0.23\pm0.001$, $0.39\pm0.003$, respectively.
Dyads 1, 2, and 4 with ILC obtain the minimum mismatch with the benchmark, which indicates a high coordination level.
Similarly, the values of SVM with ILC are $0.01\pm0.005$, $0.08\pm0.002$, $0.04\pm0.013$, $0.06\pm0.004$, respectively.
Note that the mismatch of ILC with the benchmark is smaller than other control strategies, although it does not lead to the largest value of SVM. It can be seen that the statistical performance of ILC is the closest to that of HP-HP pair. Specifically, for RMSE, Dyads 1, 2, and 4 obtain the best matching results. For CV, Dyads 1, 2, and 4 obtain the best matching results with ILC. For SVM, Dyads 2, 3, and 4 obtain the best matching results with ILC.

\begin{figure}[t!]
\centering
\begin{minipage}[t]{0.48\textwidth}
\centering
\includegraphics[width=7cm,height=6cm]{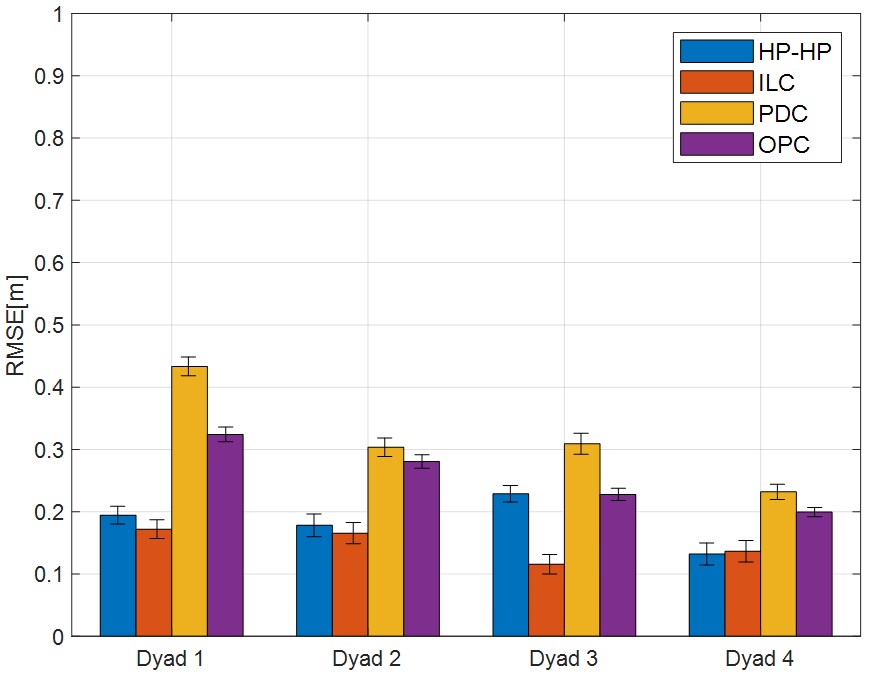}
\end{minipage}
\begin{minipage}[t]{0.48\textwidth}
\centering
\includegraphics[width=7cm,height=6cm]{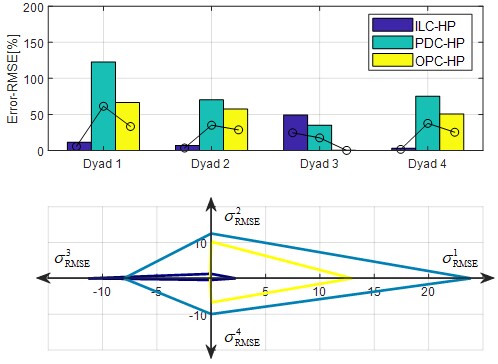}
\end{minipage}
\caption{\label{ireg} Statistical analysis of the LFC. Left: Mean values are represented by the
height of each rectangle, whereas standard deviations are represented by black
error bars. The blue rectangles and their error bars characterize the kinematic
feature of HP-HP pairs in terms of RMSE, which is viewed as the benchmark. The red rectangle represents the statistical performance of HP-VP pairs with ILC, the yellow rectangle represents the statistical performance of HP-VP pairs with  PD control (PDC) and the purple rectangle represents the statistical performance of HP-VP pairs with optimal control (OPC). Right: Error rate of RMSE.}
\label{fig6}
\end{figure}

\begin{figure}[t!]
\centering
\begin{minipage}[t]{0.48\textwidth}
\centering
\includegraphics[width=7cm,height=6cm]{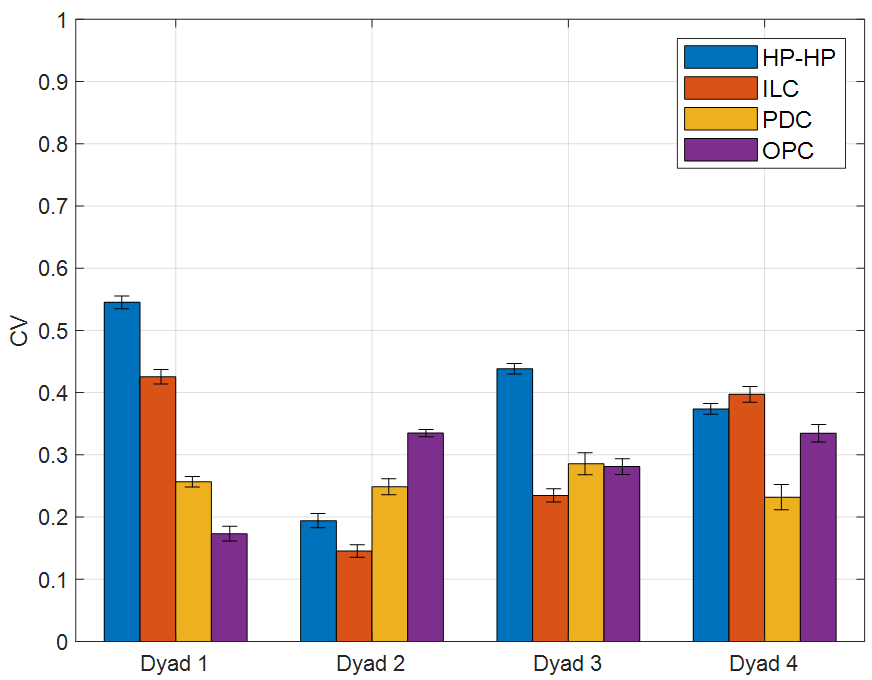}
\end{minipage}
\begin{minipage}[t]{0.48\textwidth}
\centering
\includegraphics[width=7cm,height=6cm]{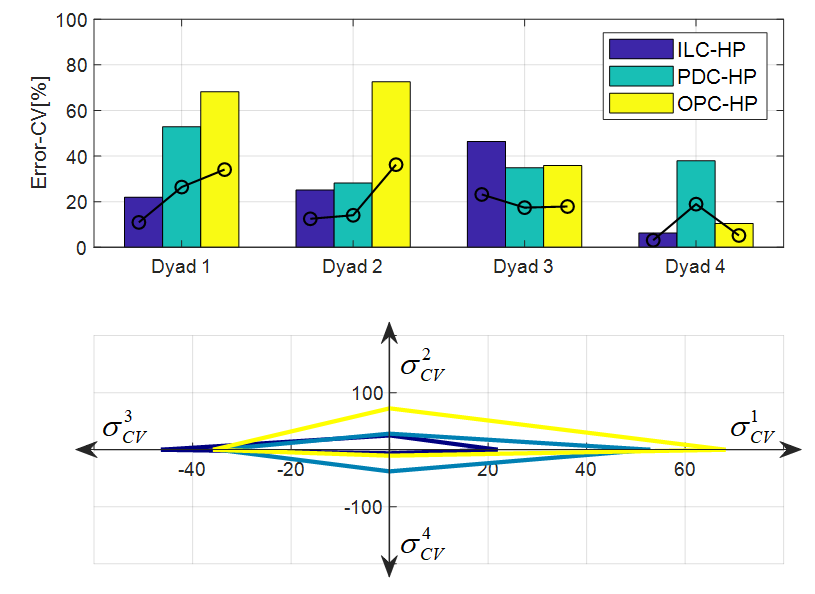}
\end{minipage}
\caption{Statistical analysis of the LFC. Left: Mean values are represented by the
height of each rectangle, whereas standard deviations are represented by black
error bars. The blue rectangles and their error bars characterize the kinematic
feature of HP-HP pairs in terms of CV, which is viewed as the benchmark. The red rectangle represents the statistical performance of HP-VP pairs with ILC, the yellow rectangle represents the statistical performance of HP-VP pairs with PDC and the purple rectangle represents the statistical performance of HP-VP pairs with OPC. Right: Error rate of CV.}
\label{fig7}
\end{figure}

\subsection{Comparisons with other control strategies} \label{sec:com}
In order to compare different control strategies, the error rates of the performance indexes are computed for the four dyads, respectively. And the whole performance of control strategies is quantified by the area of the polygon surrounded by the average error rates. Note that the parameters setting for PDC and OPC are the same as existing work~\cite{Zhai18b,Zhai18a}. For the ILC, the error rates of RMSE for 4 dyads are 11.52\%, 7.07\%, 49.39\% and 3.33\%, respectively. For the PDC, they are 122.83\%, 70.27\%, 35.14\% and 75.36\%, respectively.
In addition, the OPC leads to 66.68\%, 57.43\%, 0.44\% and 50.79\%, respectively. It is obvious that the area of the polygon with ILC is the smallest among three control strategies, which also applies to the metrics of CV and SVM. The ILC optimizes the control input iteratively based on state error to improve the control performance. It is demonstrated that ILC allows to achieve the best matching performance. According to experimental results, the performance of Dyads 1, 2 and 4 with ILC is relatively consistent with the RMSE benchmark. In terms of CV, Dyads 1, 2 and 4 with ILC match the benchmark well, and different control strategies lead to a similar performance in Dyad 3.

\begin{figure}[t!]
\centering
\begin{minipage}[t]{0.48\textwidth}
\centering
\includegraphics[width=7cm,height=6cm]{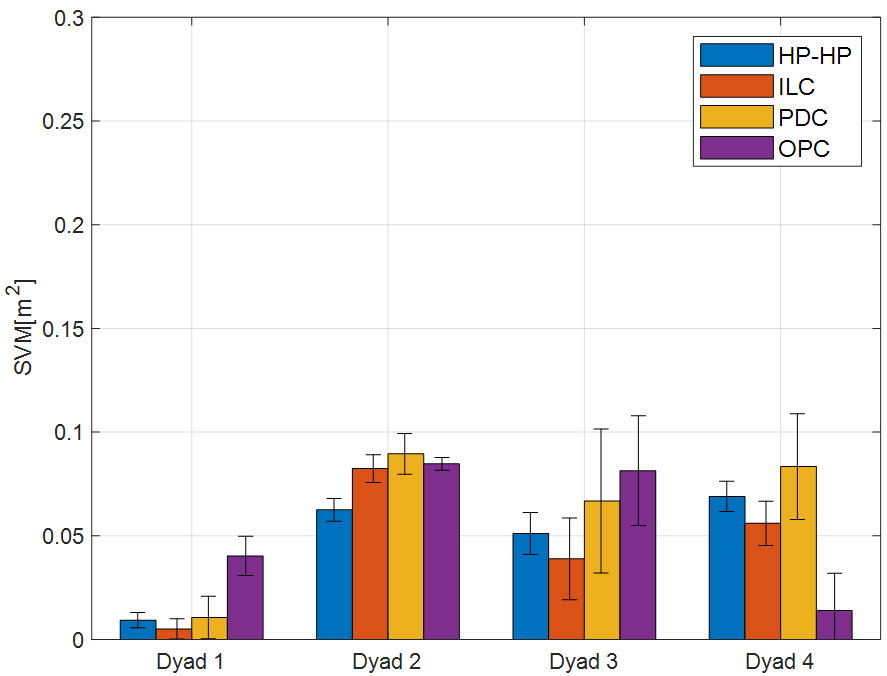}
\end{minipage}
\begin{minipage}[t]{0.48\textwidth}
\centering
\includegraphics[width=7cm,height=6cm]{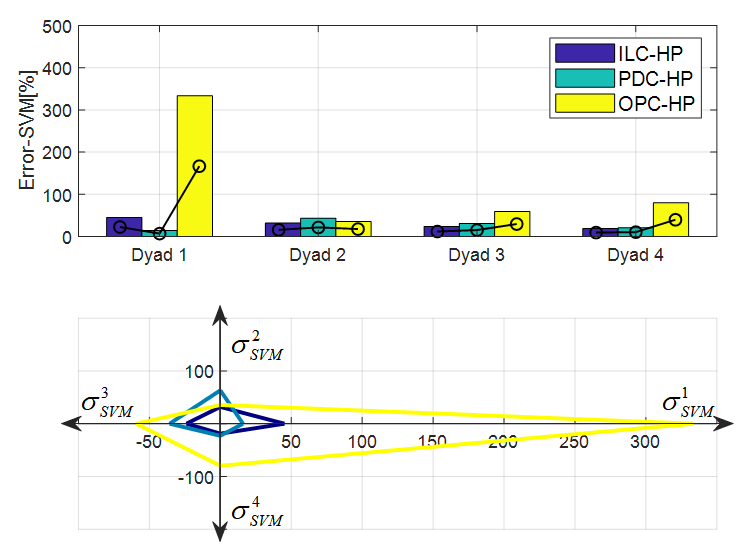}
\end{minipage}
\caption{Statistical analysis of the LFC. Left: Mean values are represented by the
height of each rectangle, whereas standard deviations are represented by black
error bars. The blue rectangles and their error bars characterize the kinematic
feature of HP-HP pairs in terms of SVM, which is viewed as the benchmark. The red rectangle represents the statistical performance of HP-VP pairs with ILC, the yellow rectangle represents the statistical performance of HP-VP pairs with PDC and the purple rectangle represents the statistical performance of HP-VP pairs with OPC. Right: Error rate of SVM.}
\label{fig8}
\end{figure}

\begin{figure}[t!]
\scalebox{0.3}[0.3]{\includegraphics{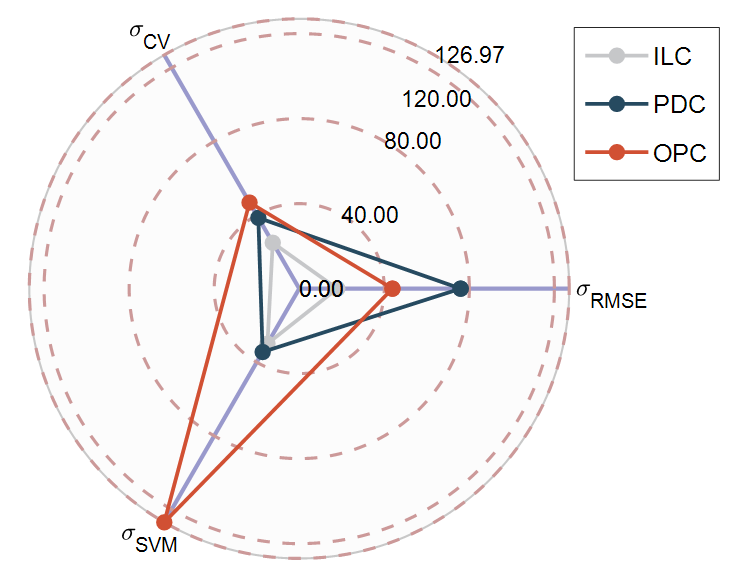}}\centering
\caption{\label{ireg}  Matching results of RMSE, CV and SVM. The area of the triangle in the radar chart shows the matching results of HP-HP experiment and HP-VP experiment. A smaller area of the triangle indicates the better matching result.}\label{figradar}
\end{figure}

In addition, the motion priming of VP to the HP is discussed with the aid of SVM as shown in Fig.~\ref{fig8}.
It is concluded that the ILC obtains the best performance among three strategies.
For the PDC and OPC, the inherent vibration of VP cannot be well suppressed. Therefore, the SVM of these control strategies are larger than that of ILC. But experimental results show that ILC can maintain a large range of motion continuously and stably. This might be because VP has the best tracking performance with ILC, so the HP motion can be tracked to the maximum and the range of VP motion is wider than other control methods. To demonstrate the advantage of ILC, Figure~\ref{figradar} shows the comprehensive matching results of all performance indexes. On the whole, ILC achieves the best comprehensive performance due to its smallest polygon area. A larger mismatch of CV may result from the unfamiliarity with "$\infty$" of participants at the initial stage.

\section{Conclusions}\label{sec:con}
This work developed a coordination model with ILC law in order to create a customized VP for social motor coordination with the HP in
the 2D mirror game. The proposed control law ensures the bounded state error while maintaining the desired kinematic feature of VP. Moreover, the advantages of ILC law were demonstrated through comparative analysis with other control strategies in the experimental validations.
Future work may include the study of optimal coordination mechanism between two players in the 2D mirror game to enhance the level
of motor coordination. Moreover, In the experimental validation, it is observed that the tracking performance of VP has been significantly improved after adding integral control or feedback control, which implies that integral control and feedback control can enhance the convergence of ILC. Therefore, future work may also include ILC with integral control and feedback control.

\section*{Acknowledgment}
This work is supported by the Fundamental Research Funds for the Central Universities, China University of Geosciences~(Wuhan), and it is also supported by the National Natural Science Foundation of China (62003088) and the National Natural Science Foundation of Fujian Province (2021J02008).

\section*{Appendix}

This section provides the proof of Theorem \ref{theo4.1}, together with parameter setting for experimental validations.

\subsection*{A. Proof of Theorem \ref{theo4.1}}\label{app_the1}
The sketch of proof is presented as follows. First of all, the relationship between state error and input error is established.
Afterwards, the iterative formula of input errors in two consecutive steps is created. Finally, the upper bound of state error is
estimated with the control law (\ref{ilc}). Before technical analysis, the time-weighted norm is defined as $\|\cdot\|_{\lambda}=\max_{t\in[0,T]}e^{-\lambda t}\|\cdot\|$ with the Frobenius norm $\|\cdot\|$.
The proof details are elaborated in following three steps.

\noindent\textbf{Step1:} In light of (\ref{ilc-sys}) and (\ref{error-sys}), one obtains
$\delta\dot{x}_k=\delta\bf{H}(\mathbf{x}_k,\bf{\xi})$+$\mathbf{B}\delta\mathbf{u}_k$.
By multiplying both sides of above equation with $\delta x_k$, one gets
$\big \langle \delta \dot x_k , \delta  x_k \big \rangle$=$\langle\delta H(\mathbf{x}_k,\bf{\xi})$, $\delta x_k\rangle$+$\big\langle B\delta u_k,\delta x_k\big\rangle$. It follows from (\ref{lipschitz}) that
\begin{equation*}
\begin{aligned}
\big\langle\delta\dot{x}_k,\delta{x}_k\big\rangle &=\big\langle\delta H(\bf{x}_k,\bf{\xi}),\delta {x}_k \big\rangle+\big\langle B\delta u_k ,\delta{x}_k \big\rangle \\
&\leq C_H\|\delta x_k \|^2+(B\delta u_k )^T\delta x_k \\
&\leq C_H\|\delta x_k \|^2+\|B\|\cdot \|\delta u_k \| \cdot \|\delta x_k \|.
\end{aligned}
\end{equation*}
Thus, one obtains
\begin{equation*}
\begin{aligned}
\frac{d[(\delta x_k )^T\cdot\delta x_k ]}{dt}=2\langle\delta \dot{x}_k,\delta x_k\rangle
&\le 2C_H\|\delta x_k \|^2+2\|B\| \cdot \|\delta u_k \| \cdot \|\delta x_k \| \\
&\le 2C_H\|\delta x_k \|^2+\|\delta x_k \|^2+\|B\|^2 \|\delta u_k \|^2 \\
&\le(2C_H+1)\|\delta x_k \|^2+\|B\|^2 \|\delta u_k \|^2,
\end{aligned}
\end{equation*}
which leads to
\begin{equation}\label{gron}
\begin{aligned}
\big \langle \delta  x_k, \delta x_k \big \rangle =\|\delta x_k \|^2\le \int_{0}^{t}\big[ (2C_H+1)\|\delta x_k(\tau)\|^2+\|B\|^2 \|\delta u_k(\tau)\|^2 \big] d\tau.
\end{aligned}
\end{equation}
%----------------------------------------------------------------
%Inspired by Gronwall lemma~\cite{Gro19} and Inequality (\ref{gron}), one has
%\begin{equation*}
% \|\delta x_k \|^2 \le \int_{0}^{t}
% \big[(2C_H+1)\|\delta x_k(\tau)\|^2+ \|B\|^2 \|\delta u_k(\tau)\|^2 \big]d\tau.
%\end{equation*}
Let $F =\int_{0}^{t}(2C_H+1)\|\delta x_k(\tau)\|^2d\tau$, $G=\int_{0}^{t}\|B\|^2 \|\delta u_k(\tau)\|^2 d\tau$ and the time derivative of $F$ can be estimated by $\dot F=(2C_H+1)\|\delta x_k \|^2\le (2C_H+1)F+ (2C_H+1)G$, which is equivalent to $\dot{F}-(2C_H+1)F \le (2C_H+1)G$. Thus, one has
\begin{equation*}
\frac{d}{dt}[F e^{-\int_{0}^{t}2C_H+1d\tau}] \le (2C_H+1)G e^{-\int_{0}^{t}2C_H+1d\tau},
\end{equation*}
which leads to
\begin{equation*}
F  \le  \int_{0}^{t} (2C_H+1)G(\tau)e^{\int_{\tau}^{t}2C_H+1ds}d\tau.
\end{equation*}
Then it follows from mean value theorems for definite integrals that
\begin{equation}\label{gronwallinequality}
\begin{aligned}
\|\delta x_k \|^2 &\le G+\int_{0}^{t} (2C_H+1)G(\tau)e^{\int_{\tau}^{t}2C_H+1ds}d\tau\\
&=G+G(\varepsilon)\int_{0}^{t} (2C_H+1)e^{\int_{\tau}^{t}2C_H+1ds}d\tau, \quad \varepsilon \in [0,t].
\end{aligned}
\end{equation}
%----------------------------------------------------------------
Since $\int_{0}^{\varepsilon}\|B\|^2 \|\delta u_k(\tau)\|^2 d\tau \le \int_{0}^{\varepsilon}\|B\|^2 \|\delta u_k(\tau)\|^2 d\tau+\int_{\varepsilon}^{t}\|B\|^2 \|\delta u_k(\tau)\|^2 d\tau$, one gets
\begin{equation*}
\begin{aligned}
\|\delta x_k \|^2 \le G\left(1+\int_{0}^{t} (2C_H+1)e^{\int_{\tau}^{t}2C_H+1ds}d\tau\right)=Ge^{(2C_H+1)t}.
\end{aligned}
\end{equation*}
Because of $G=\int_{0}^{t}\|B\|^2 \|\delta u_k(\tau)\|^2 d\tau$, one obtains
\begin{equation*}
\begin{aligned}
\|\delta x_k \|^2 &\le e^{(2C_H+1)t}\|B\|^2\int_{0}^{t}\|\delta u_k(\tau)\|^2d\tau\\
&\le e^{(2C_H+1)t}\|B\|^2\int_{0}^{t}e^{2\lambda \tau}d\tau\|\delta u_k\|^2_\lambda\\
&=e^{(2C_H+1)t}\|B\|^2\frac{e^{2\lambda t}-1}{2\lambda}\|\delta u_k\|^2_\lambda.
\end{aligned}
\end{equation*}
%Since $2\lambda>2C_H+1$, one can get
%\begin{equation*}
%\begin{aligned}
%\|\delta x_k \|^2\le e^{(2C_H+1)t}\|B\|^2 %\frac{e^{(2\lambda-(2C_H+1))t}-1}{2\lambda-(2C_H+1)}\|\delta %u_k(\tau)\|^2_\lambda=\|B\|^2\frac{e^{2\lambda t}-e^{(2C_H+1)t}}{2\lambda-(2C_H+1)}\|\delta %u_k(\tau)\|^2_\lambda.
%\end{aligned}
%\end{equation*}
Based on the definition of $\lambda$-norm, one gets
\begin{equation}\label{g1}
\begin{aligned}
\|\delta x_k \|^2_\lambda=\Big[\max\limits_{t\in[0,T]} e^{-\lambda t}\|\delta x_k \|\Big]^2=\max\limits_{t\in[0,T]} e^{-2\lambda t}\|\delta x_k \|^2\le \frac{\|B\|^2e^{(2C_H+1)T}}{2\lambda}\|\delta u_k \|^2_\lambda.
\end{aligned}
\end{equation}
\noindent \textbf{Step2:} By introducing control law (\ref{ilc}), it is ensured that input error is bounded by selecting appropriate parameters. As a result, state error is bounded according to inequality (\ref{g1}).
It follows from
\begin{equation*}
\begin{aligned}
\delta u_{k+1} &=u_h -u_k +u_k -u_{k+1} \\
&=\delta u_k -\kappa_p e_k-\kappa_v \dot e_k-\kappa_s s_k\\
&=\big[ I-\kappa_vCB\big]\delta u_k-\kappa_p C\delta x_k-\kappa_v C\delta H(\mathbf{x_k} ,\bf{\xi})-\kappa_s s_k,\\
\end{aligned}
\end{equation*}
that $\|\delta u_{k+1}\|\le \|I-\kappa_vCB\|\cdot\|\delta u_k \|+\|\kappa_pC\|\cdot\|\delta x_k\|+\|\kappa_vC\|\cdot \|\delta H(\mathbf{x_k} ,\bf{\xi})\|+\kappa_s\|s_k\|$, which leads to
$\|\delta u_{k+1} \|^2\le 4\big| \big|I-\kappa_vCB\|^2 \| \delta u_k \|^2+4\|\kappa_pC\|^2 \| \delta x_k \|^2+4\|\kappa_vC\|^2\| \delta H(\mathbf{x_k} ,\bf{\xi})\|^2+4\kappa_s^2\|s_k\|^2$.
Since $H(\mathbf{x_k} ,\bf{\xi})$ is Lipschitz continuous, one gets
$\|H(\mathbf{x_h} ,\bf{\xi})-H(\mathbf{x_k} ,\bf{\xi})\|^2\le C_H^2\|x_h -x_k\|^2$, which allows to get
$\|\delta u_{k+1} \|^2\le 4\|I-\kappa_vCB\|^2\| \delta u_k \|^2+(4\|  \kappa_pC\|^2+4C_H^2\|\kappa_vC\|^2)\|\delta x_k\|^2+4\kappa_s^2\|s_k\|^2$ and
\begin{equation}\label{g2}
\begin{aligned}
\|\delta u_{k+1} \|^2_\lambda\le4\big| \big|I-\kappa_vCB\|^2\| \delta u_k \|^2_\lambda+(4\|  \kappa_pC\|^2+4C_H^2\|  \kappa_vC\|^2)\| \delta x_k \|^2_\lambda+4\kappa_s^2\|s_k\|^2_\lambda.
\end{aligned}
\end{equation}
By integrating inequality (\ref{g1}) with inequality (\ref{g2}), one gets
\begin{equation*}
\begin{aligned}
\|\delta u_{k+1} \|^2_\lambda &\le 4\big| \big|I-\kappa_vCB\|^2\| \delta u_k \|^2_\lambda+\frac{\|B\|^2e^{(2C_H+1)T}(4\|  \kappa_pC\|^2+4C_H^2\|  \kappa_vC\|^2)}{2\lambda}\|\delta u_k \|^2_\lambda+4\kappa_s^2\|s_k\|^2_\lambda\\
&=(\sigma_1+\sigma_2)\|\delta u_k \|^2_\lambda+4\kappa_s^2\|s_k\|^2_\lambda
\end{aligned}
\end{equation*}
with
$$
\sigma_1=4\|I-\kappa_vCB\|^2, \quad\sigma_2=\frac{\|B\|^2e^{(2C_H+1)T}(4\|  \kappa_pC\|^2+4C_H^2\|  \kappa_vC\|^2)}{2\lambda}.
$$
Since $s_k=v-y_k=v-y_h+y_h-y_k=v-y_h+C\delta x_k$, one has
\begin{equation*}
\begin{aligned}
\|s_k\|^2_\lambda &\le 2\|v-y_h\|^2_\lambda+2\|C\|^2\|\delta x_k\|^2_\lambda \le 2\|v-y_h\|^2_\lambda+\frac{\|C\|^2\|B\|^2e^{(2C_H+1)T}}{\lambda}\|\delta u_k\|^2_\lambda,
\end{aligned}
\end{equation*}
which leads to $\|\delta u_{k+1}\|^2_\lambda\le(\sigma_1+\sigma_2+\sigma_3)\|\delta u_k \|^2_\lambda+8\kappa_s^2\|v-y_h\|^2_\lambda$
with $\sigma_3={4\|C\|^2\|B\|^2\kappa_s^2e^{(2C_H+1)T}}/{\lambda}$.
By introducing $\sigma=\sigma_1+\sigma_2+\sigma_3$ and $\eta=8\kappa_s^2\|v-y_h\|^2_\lambda$, one gets $\|\delta u_{k+1} \|^2_\lambda \le \sigma \|\delta u_k \|^2_\lambda+\eta$.

\noindent \textbf{Step3:} By tuning the parameters to ensure $\sigma<1$, it follows that
\begin{equation*}
\begin{aligned}
\|\delta u_{k+n}\|^2_\lambda\le \sigma \|\delta u_{k+n-1} \|^2_\lambda+\eta\le\sigma^n\|\delta u_k \|^2_\lambda+\eta\sum_{i=0}^{n-1}\sigma^{i} =\sigma^n\|\delta u_k \|^2_\lambda+\frac{\eta (1-\sigma^n)}{1-\sigma}.
\end{aligned}
\end{equation*}
Finally, it is concluded that $\lim\limits_{n\to\infty}\|\delta u_{k+n} \|^2_{\lambda}\le\eta/(1-\sigma)$.
According to (\ref{g1}), it is straightforward to obtain
$$
\lim\limits_{n\to\infty}\|\delta x_{k+n}\|^2_\lambda \le \frac{\eta\|B\|^2e^{(2C_H+1)T}}{2\lambda(1-\sigma)}.
$$
Note that $v$ and $y_h$ are bounded and $\kappa_s$ is a positive constant, which implies that $\delta x_k$ is bounded.
The proof is thus completed.

%$\sigma \|\delta u_k \|^2_\lambda$ monotonically converges to 0

\subsection*{B. Parameter setting}

\hspace*{\fill}
\begin{table}[h] \centering
\caption{\label{ireg} Parameter setting for HP-VP pairs.}
\begin{tabular}{cccccc}
   \toprule
   Dyad ID & $\kappa_p$ & $\kappa_v$ & $\kappa_s$ \\
   \midrule
   Dyad 1 & 0.31 & 0.01 & 0.02  \\
   Dyad 2 & 0.45 & 0.02 & 0.03  \\
   Dyad 3 & 0.16 & 0.02 & 0.01  \\
   Dyad 4 & 0.41 & 0.04 & 0.03  \\
   \bottomrule
\end{tabular}
\end{table}

\end{document}